\theoremstyle{plain}
\newtheorem{theorem}{Theorem}
\newtheorem{lemma}[theorem]{Lemma}
\theoremstyle{definition}
\theoremstyle{remark}
\DeclareMathOperator{\D}{d}
\begin{document}

%+Title
\title{A Functional Identity Involving Elliptic Integrals }
\author{M. Lawrence Glasser and Yajun Zhou}\address{Dpto. de F\'isica Te\'orica, Facultad de Ciencias, Universidad de Valladolid, Paseo Bel\'en 9, 47011 Valladolid, Spain;
Donostia International Physics Center, P. Manuel de
Lardizabal 4, \\ E-20018 San Sebasti\'an, Spain} 
\email{laryg@clarkson.edu}

\address{Program in Applied and Computational Mathematics (PACM), Princeton University, Princeton, NJ 08544; Academy of Advanced Interdisciplinary Sciences (AAIS), Peking University, Beijing 100871, P. R. China }
\email{yajunz@math.princeton.edu, yajun.zhou.1982@pku.edu.cn}
\date{\today}
\maketitle
%-Title

%+Abstract
\begin{abstract}
    We show that the following double integral
\[\int_{0}^\pi\mathrm{d}\, x\int_0^x\mathrm{d}\, y\frac{1}{\sqrt{1-\smash[b]{p}\cos x}\sqrt{1+\smash[b]{q\cos y}}}\]remains invariant as one trades the parameters $p$ and $q$ for $p'=\sqrt{1-p^2}$ and $q'=\sqrt{1-q^2}$ respectively. This invariance property is suggested from symmetry considerations in the operating characterstics of a semiconductor Hall-effect device.\\\\\textit{Keywords}: Incomplete elliptic integrals, complete elliptic integrals, Landen's transformation. \\\\\textit{Subject Classification (AMS 2010)}: 33E05\ (Primary), 78A35 (Secondary)\end{abstract}

%-Abstract
\section{Introduction}
When an electron current flows perpendicular to a magnetic field through a conducting medium, the charges are forced to deviate to one side creating an imbalance which results in a measurable electric potential conveying important information about the material. A device based on this, so-called Hall effect, has been studied in detail by Ausserlechner \cite[]{Auss} who has found that its operating features are summed up in the Hall-geometry-factor
\[G(\lambda_f,\lambda_p)=\frac{1}{{\bf K'}\left(\frac{1-p}{1+p}\right){\bf K}\left(\frac{1-f}{1+f}\right)}\int_0^1\frac{\int _0^x\frac{\D y}{\sqrt{1-\left(\frac{1-p}{1+p}\right)^2(1-y^2)}\sqrt{1-y^2}}}{\sqrt{1-x^2}\sqrt{1-\left[1-\left(\frac{1-f}{1+f}\right)^2\right](1-x^2)}}\D x.\]Here $p$ and $f$ are related to the input and output resistances  by
$\lambda_f=2{\bf K}(f)/{\bf K'}(f)$ and $\lambda_p={\bf K'}(p)/[2{\bf K}(p)]$, with the complete elliptic integral of the first kind being defined by\[
\mathbf K(\sqrt{t}):=\int_0^{\pi/2}\frac{\D\theta}{\sqrt{1-t\sin^2\theta}}\equiv \mathbf K'(\sqrt{1-t}).
\]Due to the symmetry of the device $G(\lambda_f,\lambda_p)/\sqrt{\lambda_f\lambda_p}$
must be unchanged under the substitution $(\lambda_f,\lambda_p)\rightarrow(2/\lambda_f,2/\lambda_p).$ This can be recast into the remarkable identity that
$$\int_0^{\pi}\frac{\D x}{\sqrt{1-p\cos x}}\int_0^x\frac{\D y}{\sqrt{1+q\cos y}}$$
is invariant under $(p,q)\rightarrow(\sqrt{1-p^2},\sqrt{1-q^2})$, which is our aim to prove in this note.

\section{A Double Integral Identity}
\begin{theorem}\label{thm:pq_recip}For parameters $p,q\in(0,1)$, define correspondingly $p'=\sqrt{1-p^2},q'=\sqrt{1-q^2}$, then we have an integral identity $A(p,q)=A(p',q')$, where\begin{align}
A(p,q):={}&\int_{0}^\pi\D x\int_0^x\D y\frac{1}{\sqrt{1-\smash[b]{p}\cos x}\sqrt{1+\smash[b]{q\cos y}}}\notag\\={}&\frac{4}{\sqrt{(1-\smash[b]p)(1+\smash[b]{q})}}\int_0^{\pi/2}\frac{\D\theta}{\sqrt{1+\frac{2p}{1-p}\sin^2\theta}}\int_0^\theta\frac{\D\phi}{\sqrt{1-\frac{2q}{1+q}\sin^2\phi}}.
\end{align}\end{theorem}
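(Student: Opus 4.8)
The plan is to first rewrite $A(p,q)$ in a symmetric form on which the substitution $(p,q)\mapsto(p',q')$ becomes a Landen-type involution. Performing $x\mapsto\pi-x$ in the outer integral and then exchanging the order of integration,
\[
A(p,q)=\iint_{\substack{u,v>0\\ u+v<\pi}}\frac{\D u\,\D v}{\sqrt{1+p\cos u}\,\sqrt{1+q\cos v}},
\]
which is manifestly symmetric under $p\leftrightarrow q$ (swap $u,v$; the triangle is symmetric), so already $A(p,q)=A(q,p)$. Writing $p=2r/(1+r^2)$ and $q=2\varrho/(1+\varrho^2)$ with $r,\varrho\in(0,1)$, one checks directly that $1-p^2=\bigl((1-r^2)/(1+r^2)\bigr)^2$, so $p\mapsto p'=\sqrt{1-p^2}$ is exactly the Landen involution $r\mapsto(1-r)/(1+r)$ (namely $\tau\mapsto-2/\tau$ on the modular parameter), whose fixed point $r=\varrho=\sqrt2-1$ corresponds to $p=q=1/\sqrt2$.

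Next I would pass to half-angles $u=2\theta$, $v=2\phi$. Using $1+p\cos2\theta=\bigl[(1+r)^2/(1+r^2)\bigr]\bigl(1-(r^\sharp)^2\sin^2\theta\bigr)$ with the ascending-Landen modulus $r^\sharp:=2\sqrt r/(1+r)$, the algebraic prefactors collapse (because $(r')^\sharp$ works out to $\sqrt{1-r^2}$), and the claim $A(p,q)=A(p',q')$ is equivalent to the single clean identity
\[
\mathcal T\bigl(r^\sharp,\varrho^\sharp\bigr)=\frac{(1+r)(1+\varrho)}{2}\,\mathcal T\Bigl(\sqrt{1-r^{2}},\sqrt{1-\varrho^{2}}\Bigr),\qquad
\mathcal T(a,b):=\iint_{\substack{\theta,\phi>0\\ \theta+\phi<\pi/2}}\frac{\D\theta\,\D\phi}{\sqrt{1-a^2\sin^2\theta}\,\sqrt{1-b^2\sin^2\phi}},
\]
a Landen functional equation for the ``double complete elliptic integral over a triangle'' $\mathcal T$. (By an integration by parts $\mathcal T(a,b)=\mathcal T(b,a)$, matching $A(p,q)=A(q,p)$; and $\mathcal T(a,b)+\mathcal T^{cc}(a,b)=\mathbf K(a)\mathbf K(b)$, where $\mathcal T^{cc}$ is the cosine version obtained from $\theta\mapsto\tfrac\pi2-\theta$, $\phi\mapsto\tfrac\pi2-\phi$.)

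To prove this identity I would apply the descending Landen (Gauss) transformation to each elliptic factor on the left: the substitution $\sin\theta=(1+r)\sin\Theta/(1+r\sin^2\Theta)$ — for which $\cos\theta=\cos\Theta\sqrt{1-r^2\sin^2\Theta}/(1+r\sin^2\Theta)$ and $\sqrt{1-(r^\sharp)^2\sin^2\theta}=(1-r\sin^2\Theta)/(1+r\sin^2\Theta)$ — carries $\D\theta/\sqrt{1-(r^\sharp)^2\sin^2\theta}$ to $(1+r)\,\D\Theta/\sqrt{1-r^2\sin^2\Theta}$, and likewise in $\phi$. The left side then becomes $(1+r)(1+\varrho)$ times a double integral with moduli $r,\varrho$ over the transported region $\{\,\theta(\Theta)+\phi(\Phi)<\pi/2\,\}$, and the identity reduces to matching this against $\tfrac12$ of the same-shaped integral with the \emph{complementary} moduli over the plain triangle. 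This last comparison is the crux and I expect it to be the main obstacle: the Gauss amplitude maps do not preserve the line $\theta+\phi=\pi/2$, and complementation of a modulus inside an \emph{incomplete} elliptic integral cannot be effected by an elementary real substitution (it interchanges the periods $\mathbf K$ and $\mathbf K'$). I would attack it through the square--complement decomposition $\mathcal T+\mathcal T^{cc}=\mathbf K\cdot\mathbf K$ applied on both sides, together with the classical Landen relations $\mathbf K(k^\sharp)=(1+k)\mathbf K(k)$ and $\mathbf K'(k^\sharp)=\tfrac{1+k}{2}\mathbf K'(k)$, so that the genuinely new content is isolated in complete elliptic integrals where Landen closes the argument, with the boundary contributions produced by the Gauss substitution on the triangular region absorbed into the prefactor $\tfrac12$.

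If the direct bookkeeping becomes unwieldy, the fallback is a holonomic argument. Differentiating $\mathcal T$ twice in the modulus and using that the Legendre operator applied to $(1-a^2\sin^2\theta)^{-1/2}$ is a total $\theta$-derivative, one finds that $\mathcal T(a,b)$ satisfies, as a function of $a$, an inhomogeneous Legendre equation whose inhomogeneity reduces — after an integration by parts against the inner limit — to an ordinary complete elliptic integral attached to the curve $y^2=(1-a^2\sin^2\theta)(1-b^2\cos^2\theta)$; iterating, both sides of the core identity are seen to satisfy the same linear differential system in $(r,\varrho)$, and one concludes by matching the elementary Taylor data at the fixed point $r=\varrho=\sqrt2-1$. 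In either route, the heart of the matter is controlling how the Legendre operator — equivalently, the triangular integration region — behaves under Landen's transformation combined with complementation of the modulus.
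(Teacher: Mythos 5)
Your preliminary reductions are sound: the reflection $x\mapsto\pi-x$, the parametrization $p=2r/(1+r^{2})$, the identification of $p\mapsto p'$ with $r\mapsto(1-r)/(1+r)$, and the half-angle step are all correct, and the functional equation $\mathcal T(r^\sharp,\varrho^\sharp)=\tfrac{(1+r)(1+\varrho)}{2}\,\mathcal T\bigl(\sqrt{1-r^{2}},\sqrt{1-\varrho^{2}}\bigr)$ you arrive at is indeed equivalent to $A(p,q)=A(p',q')$ — but it is only a restatement (essentially the second displayed line of the theorem in new variables). The entire content of the theorem is that functional equation, and your proposal does not prove it. The square--complement decomposition does not isolate the difficulty in complete integrals: writing the left side as $(1+r)(1+\varrho)\mathbf K(r)\mathbf K(\varrho)-\mathcal T^{cc}(r^\sharp,\varrho^\sharp)$ and the right side as $\tfrac{(1+r)(1+\varrho)}{2}\bigl[\mathbf K(\sqrt{1-r^{2}})\mathbf K(\sqrt{1-\varrho^{2}})-\mathcal T^{cc}(\sqrt{1-r^{2}},\sqrt{1-\varrho^{2}})\bigr]$, the complete-integral parts do not cancel (there is no identity $\mathbf K(r)\mathbf K(\varrho)=\tfrac12\mathbf K(\sqrt{1-r^{2}})\mathbf K(\sqrt{1-\varrho^{2}})$), so you are left comparing $\mathcal T^{cc}$ at ascending moduli with $\mathcal T^{cc}$ at complementary moduli — exactly the problem you started with. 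The claim that the boundary distortion produced by the Gauss substitution on the triangular region is ``absorbed into the prefactor $\tfrac12$'' is unsubstantiated, and, as you yourself note, complementation of the modulus inside an incomplete integral cannot be effected by an elementary real substitution; that is precisely where the proof has to happen, and it is left open.

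The holonomic fallback is likewise a program rather than an argument: you would need to show that the inhomogeneous Legendre-type system satisfied by $\mathcal T$ closes under Landen combined with complementation (complementation interchanges $\mathbf K$ and $\mathbf K'$, so closure is not automatic), identify the inhomogeneities on both sides exactly, and verify enough initial data at $r=\varrho=\sqrt2-1$; none of this is carried out. For comparison, the paper closes this gap with an ingredient your plan has no analogue of: an addition-formula lemma (Lemma~\ref{lm:K_int_repn}) that converts the incomplete double integral into single integrals of products of \emph{complete} elliptic integrals, of the shape $\int\mathbf K(\sqrt{1-\smash[b]{\beta}})\mathbf K(\sqrt{t})\,\bigl[\sqrt{1-t}\,(\sqrt{1-t}+\sqrt{1-\alpha})\bigr]^{-1}\mathrm{d}t$ plus a mirror piece. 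Only after that conversion do Ramanujan's $P_{-1/4}$ evaluations of $\mathbf K(\sqrt{2q/(1+q)})$ and $\mathbf K(\sqrt{(1-q)/(1+q)})$ and a Landen substitution $t=4\sqrt{s}/(1+\sqrt{s})^{2}$ under the integral sign produce an expression manifestly symmetric under $(p,q)\leftrightarrow(p',q')$. Some device of this kind — trading the triangle integral for single integrals of $\mathbf K$-products so that complementary-modulus symmetry becomes visible — is the missing core of your proposal.
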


Before proving the functional equation stated in the theorem above, we need to convert double integrals like $A(p,q)$ into single integrals over the products of elliptic integrals and elementary functions, as described in the lemma below.\begin{lemma}\label{lm:K_int_repn}For $0<\beta<\alpha<1$, the following identity holds:\footnote{The constraint  $0<\beta<\alpha<1$ is needed in the derivation of \eqref{eq:K_int_sum}, the validity of which extends to $ \alpha=2p/(p-1)<0,\beta=2q/(1+q)\in(0,1)$, by virtue of analytic continuation. }\begin{align}
&\int_0^{\pi/2}\frac{\D\theta}{\sqrt{1-\smash[b]{\alpha}\sin^2\theta}}\int_0^\theta\frac{\D\phi}{\sqrt{1-\beta\sin^2\smash[b]{\phi}}}\notag\\={}&\frac{1}{\pi}\int_0^\beta\frac{\mathbf K(\sqrt{1-\smash[b]{\beta}})\mathbf K(\sqrt{t})}{\sqrt{1-t}+\sqrt{1-\alpha}}\frac{\D t}{\sqrt{1-t}}+\frac{1}{\pi}\int_\beta^1\frac{\mathbf K(\sqrt{\smash[b]{\beta}})\mathbf K(\sqrt{1-t})}{\sqrt{1-t}+\sqrt{1-\alpha}}\frac{\D t}{\sqrt{1-t}},\label{eq:K_int_sum}
\end{align}where the integrations are carried out along straight line-segments joining the end points.  \end{lemma}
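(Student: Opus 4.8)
Set $F(\theta\mid m):=\int_0^\theta(1-m\sin^2\phi)^{-1/2}\,\D\phi$, so that the left-hand side of \eqref{eq:K_int_sum} is the single integral $L(\alpha,\beta):=\int_0^{\pi/2}(1-\alpha\sin^2\theta)^{-1/2}\,F(\theta\mid\beta)\,\D\theta$. My plan is to turn this double integral into a one-dimensional integral over products of complete elliptic integrals by supplying an integral representation for the inner factor $F(\theta\mid\beta)$ and then performing the $\theta$-integration explicitly. One quickly checks the symmetrisation $L(\alpha,\beta)+L(\beta,\alpha)=\mathbf{K}(\sqrt\alpha)\mathbf{K}(\sqrt\beta)$---obtained by writing $L(\alpha,\beta)$ as an integral over the triangle $\{0<\phi<\theta<\pi/2\}$, adding the integral over the complementary triangle, and relabelling $\theta\leftrightarrow\phi$---which is a structural constraint worth bearing in mind while hunting for the mechanism.

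For the main step I would employ a dispersion (Cauchy) representation of the incomplete integral. With $x=\sin^2\theta$ and $\sigma=\sin^2\phi$ one has $F(\theta\mid\beta)=\tfrac12\int_0^x[\sigma(1-\sigma)(1-\beta\sigma)]^{-1/2}\,\D\sigma$, which continues to a holomorphic function of $x$ with branch points at $x=1$ and $x=1/\beta$, and whose boundary discontinuity along the real axis is governed by complete periods of the elliptic curve $y^2=\sigma(1-\sigma)(1-\beta\sigma)$. Under the rescaling $t=\beta x$ the branch point $x=1$ becomes $t=\beta$, and there the density of the Cauchy integral switches between a branch built on $\mathbf{K}(\sqrt t)$ for $t<\beta$ and one built on $\mathbf{K}(\sqrt{1-t})$ for $t>\beta$; this is precisely the standard connection formula continuing $\mathbf{K}$ past modulus $1$, and the prefactors $\mathbf{K}(\sqrt{1-\beta})$ and $\mathbf{K}(\sqrt\beta)$ appearing in \eqref{eq:K_int_sum} are the two conjugate half-periods of the curve. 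Substituting this representation into $L(\alpha,\beta)$ and interchanging the order of integration---legitimate because $\beta<\alpha$ keeps the moving singularity off the path---leaves a $\theta$-integral against $(1-\alpha\sin^2\theta)^{-1/2}$ which evaluates in closed form to the kernel $[\sqrt{1-t}\,(\sqrt{1-t}+\sqrt{1-\alpha})]^{-1}$; reassembling the $t<\beta$ and $t>\beta$ portions of the cut then produces \eqref{eq:K_int_sum}.

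I expect the bookkeeping in this reassembly to be the chief obstacle: one must match the two sheets of $\mathbf{K}$, track the boundary terms picked up when crossing $t=\beta$, and keep the constants from the half-angle substitutions straight, so that everything collapses to exactly the kernel and the two prefactors with no residual terms; the interchanges of integration and the convergence at $t\to1$ also need (routine) attention. As a cross-check, and an alternative, more computational, derivation, one can instead insert $(1-\alpha\sin^2\theta)^{-1/2}=\tfrac2\pi\int_0^{\pi/2}(1-\alpha\sin^2\psi\sin^2\theta)^{-1}\,\D\psi$, carry out the $\theta$-integral over the triangle elementarily, and evaluate the surviving integral $\int_0^{\pi/2}\arctan\!\big(b\tan\phi\big)(1-\beta\sin^2\phi)^{-1/2}\,\D\phi$ with $b=\sqrt{1-\alpha\sin^2\psi}$: its $b$-derivative reduces, via $v=\sin^2\phi$, to an elementary integral whose antiderivative is an arctangent for $b^2>1-\beta$ but an inverse hyperbolic tangent for $b^2<1-\beta$, and since $\beta<\alpha$ the crossover value $\sqrt{1-\beta}$ lies strictly inside the range $(\sqrt{1-\alpha},1)$ swept by $b$ as $\psi$ runs over $(0,\pi/2)$, which cleaves the $\psi$-integration into the two pieces of \eqref{eq:K_int_sum}---exactly why the hypothesis reads $0<\beta<\alpha<1$. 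Finally, the extension to $\alpha=2p/(p-1)<0$ recorded in the footnote follows by analytic continuation in $\alpha$, both sides being analytic near the original interval.
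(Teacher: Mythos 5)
Your proposal is a plan rather than a proof, and both of its routes stop short of (or misdescribe) the step that actually constitutes the lemma. In Route 1, the central object --- the ``dispersion (Cauchy) representation'' of $F(\theta\mid\beta)$ whose density is ``governed by complete periods'' and switches at $t=\beta$ between $\mathbf K(\sqrt t)$ and $\mathbf K(\sqrt{1-t})$ --- is never written down, and as described it conflicts with the actual analytic structure. Writing $x=\sin^2\theta$, the function $F(x)=\tfrac12\int_0^x[\sigma(1-\sigma)(1-\beta\sigma)]^{-1/2}\,\D\sigma$ has its cut on $[1,\infty)$, and its discontinuity on $(1,1/\beta)$ is the \emph{incomplete}, $x$-dependent integral $\int_1^x[\sigma(\sigma-1)(1-\beta\sigma)]^{-1/2}\,\D\sigma$, not a complete period; only for $x\ge 1/\beta$ does it saturate to a constant. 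Moreover, under $t=\beta x$ the cut maps to $t\ge\beta$, so a naive Cauchy representation produces no density at all on $(0,\beta)$, whereas \eqref{eq:K_int_sum} requires one there with the factor $\mathbf K(\sqrt{1-\beta})\mathbf K(\sqrt t)$ (a Green's-function--type matching at $t=\beta$ that does not drop out of a one-variable cut representation). Finally, if the representation did carry a simple Cauchy kernel in $\sin^2\theta$, the subsequent $\theta$-integration against $(1-\alpha\sin^2\theta)^{-1/2}$ would be a complete elliptic integral of the \emph{third} kind, which is not elementary, so the claim that it ``evaluates in closed form to the kernel $[\sqrt{1-t}\,(\sqrt{1-t}+\sqrt{1-\alpha})]^{-1}$'' is unjustified as stated.

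Route 2 is more concrete and its opening moves do check out: inserting $(1-\alpha\sin^2\theta)^{-1/2}=\tfrac2\pi\int_0^{\pi/2}(1-\alpha\sin^2\psi\sin^2\theta)^{-1}\,\D\psi$, doing the elementary $\theta$-integral, and differentiating $J(b)=\int_0^{\pi/2}\arctan(b\tan\phi)(1-\beta\sin^2\phi)^{-1/2}\,\D\phi$ in $b$ indeed gives $J'(b)=\tfrac12\int_0^1[1-(1-b^2)v]^{-1}(1-\beta v)^{-1/2}\,\D v$, with the arctan/artanh dichotomy at $b^2=1-\beta$ and the crossover lying in the swept range precisely because $\beta<\alpha$. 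But from there you merely assert that this ``cleaves the $\psi$-integration into the two pieces of \eqref{eq:K_int_sum}''. The right-hand side of \eqref{eq:K_int_sum} is a $t$-integral whose densities are \emph{products} $\mathbf K(\sqrt{1-\beta})\mathbf K(\sqrt t)$ and $\mathbf K(\sqrt\beta)\mathbf K(\sqrt{1-t})$; nothing in your computation so far contains such products, and no mechanism is given for how the back-integration in $b$, the $\psi$-integration, and an order exchange conspire to produce them together with the kernel $[\sqrt{1-t}\,(\sqrt{1-t}+\sqrt{1-\alpha})]^{-1}=\frac1{\alpha-t}\bigl(1-\sqrt{\tfrac{1-\alpha}{1-t}}\bigr)$. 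That emergence is the substance of the lemma, not bookkeeping. For comparison, the paper gets it from an addition formula of Legendre type for elliptic integrals of the third kind (Eq.~2.3.26 of the cited reference), integrated over the free parameter, together with closed-form evaluations of the attendant logarithmic integrals and a partial-fraction identity that converts the remaining double integrals into the two $t$-ranges $(0,\beta)$ and $(\beta,1)$. Your symmetrisation check $L(\alpha,\beta)+L(\beta,\alpha)=\mathbf K(\sqrt\alpha)\mathbf K(\sqrt\beta)$ and the analytic-continuation remark for $\alpha=2p/(p-1)<0$ are fine, but they do not close the gap.
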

\begin{proof}In what follows, we write $\mathbb Y_\lambda(X):=\sqrt{X(1-X)(1-\lambda X)} $ for $X\in(0,1) $ and $\lambda\in(0,1)$, with the square root taking positive values. It is clear that the complete elliptic integral $\mathbf K(\sqrt{\lambda}),\lambda\in(0,1)$   satisfies \begin{align}
\mathbf K(\sqrt{\lambda})=\frac{1}{2}\int_0^1\frac{\D X}{\mathbb Y_\lambda(X)}.\label{eq:K_int_repn}
\end{align}

For $0<\beta<\alpha<1$, we have an addition formula of Legendre type \cite[][Eq.~2.3.26]{AGF_PartII} \begin{align}
\frac{\pi}{\mathbb Y_{\alpha}(  U)}\int^1_{  U}\frac{\D u}{\mathbb Y_{\beta}(u)}={}&\int_{0}^1 \frac{2\alpha\mathbf K(\sqrt{1-\smash[b]{\beta}})}{1-\alpha  UV} \frac{V\D V}{\mathbb Y_{\alpha}(V)}+\int_{0}^1 \frac{2\alpha\mathbf  K(\sqrt{\smash[b]{\beta}})}{1-(1-\alpha  U)V} \frac{V\D V}{\mathbb Y_{1-\alpha}(V)}\notag\\{}&-\int^{1}_{\frac{1-\alpha}{1-\beta}}\frac{\D X}{\mathbb Y_{1-\beta}(X)}\int_{\frac{1-(1-\beta)X}{\alpha}}^1\frac{\D V}{\mathbb Y_{\alpha}(V)}\frac{\alpha V}{1-\alpha   UV}.\label{eq:3rd_kind_add_comb_prep}
\end{align} Integrating over $U\in(0,1)$, we obtain\begin{align}&
\pi\int_0^1\frac{\D U}{\mathbb Y_{\alpha}(  U)}\int^1_{  U}\frac{\D u}{\mathbb Y_{\beta}(u)}=4\pi\mathbf  K(\sqrt{\smash[b]{\vphantom\beta\alpha}})\mathbf  K(\sqrt{\smash[b]{\beta}})-\pi\int_0^1\frac{\D U}{\mathbb Y_{\alpha}(  U)}\int^U_{  0}\frac{\D u}{\mathbb Y_{\beta}(u)}\notag\\={}&-2\mathbf K(\sqrt{1-\smash[b]{\beta}})\int_{0}^1  \frac{\log(1-\alpha V)\D V}{\mathbb Y_{\alpha}(V)}+2\mathbf  K(\sqrt{\smash[b]{\beta}})\int_{0}^1  \frac{\log\frac{1-(1-\alpha) V}{1-V}\D V}{\mathbb Y_{1-\alpha}(V)}\notag\\{}&+\int^{1}_{\frac{1-\alpha}{1-\beta}}\frac{\D X}{\mathbb Y_{1-\beta}(X)}\int_{\frac{1-(1-\beta)X}{\alpha}}^1\frac{\D V}{\mathbb Y_{\alpha}(V)}\log(1-\alpha V).\label{eq:Apq_add_form}
\end{align}Here, the  first two single integrals over $V$ can be evaluated in closed form \cite[][Eqs.~2.2.3 and 2.2.4]{AGF_PartII}:\begin{align}
\int_{0}^1  \frac{\log(1-\alpha V)\D V}{\mathbb Y_{\alpha}(V)}={}& \mathbf K(\sqrt{\alpha})\log(1-\alpha),\label{eq:logdn_int}
\\\int_{0}^1  \frac{\log\frac{1-(1-\alpha) V}{1-V}\D V}{\mathbb Y_{1-\alpha}(V)}={}&\pi \mathbf K(\sqrt{\alpha})+ \mathbf K(\sqrt{1-\alpha})\log(1-\alpha),\end{align}while the last double integral satisfies
 \cite[cf.][Eq.~2.3.2]{AGF_PartII}\begin{align}
&\int^{1}_{\frac{1-\alpha}{1-\beta}}\frac{\D X}{\mathbb Y_{1-\beta}(X)}\int_{\frac{1-(1-\beta)X}{\alpha}}^1\frac{\D V}{\mathbb Y_{\alpha}(V)}\log(1-\alpha V)\notag\\={}&\frac{2\mathbf K(\sqrt{1-\smash[b]{\beta}})}{\pi}\int_0^1\frac{(1-\beta U)\D U}{\mathbb Y_\beta(U)}\int_0^1\frac{\D W}{\sqrt{W(1-W)}}\frac{\log(1-\alpha W-\beta(1-W))}{1-[\alpha W+\beta(1-W)]U}\notag\\{}&-\frac{2\mathbf K(\sqrt{\smash[b]{\beta}})}{\pi}\int_0^1\frac{[1-(1-\beta)U]\D U}{\mathbb Y_{1-\beta}(U)}\int_0^1\frac{\D W}{\sqrt{W(1-W)}}\frac{\log(1-\alpha W-\beta(1-W))}{1-[1-\alpha W-\beta(1-W)]U}.
\end{align}

Substituting $W=(1-\beta  U)V/(1-\beta UV)$  such that \begin{align}
\frac{W}{1-W}=\frac{(1-\beta  U)V}{1-V},
\end{align}we obtain\begin{align}&\int
_0^1\frac{(1-\beta U)\D U}{\mathbb Y_\beta(U)}\int_0^1\frac{\D W}{\sqrt{W(1-W)}}\frac{\log(1-\alpha W-\beta(1-W))}{1-[\alpha W+\beta(1-W)]U}\notag\\={}&\int
_0^1\frac{\D U}{ \sqrt{U(1-U)}}\int_0^1\frac{\D V}{\sqrt{V(1-V)}}\frac{\log\left( 1-\alpha+\frac{(\alpha-\beta)(1-V)}{1-\beta U V} \right)}{1-\alpha U V},
\end{align}where\begin{align}&
\frac{\log\left( 1-\alpha+\frac{(\alpha-\beta)(1-V)}{1-\beta U V} \right)-\log(1-\alpha V)}{1-\alpha U V}\notag\\={}&\int_0^\beta\left[ \frac{1}{1-tUV}- \frac{1-\alpha}{(1-t)(1-V)+(1-\alpha)(1-tU)V}\right]\frac{\D t}{t-\alpha}
\end{align}allows us to integrate over $V$ and $U$ in a sequel on the right-hand side, leading to \begin{align}&\int
_0^1\frac{(1-\beta U)\D U}{\mathbb Y_\beta(U)}\int_0^1\frac{\D W}{\sqrt{W(1-W)}}\frac{\log(1-\alpha W-\beta(1-W))}{1-[\alpha W+\beta(1-W)]U}\notag\\={}&2\pi\left[\int_0^\beta\frac{\mathbf K(\sqrt{t})}{t-\alpha}\left( 1-\sqrt{\frac{1-\alpha}{1-t}} \right)\D t+\frac{\mathbf K(\sqrt{\alpha})}{2}\log(1-\alpha)\right].
\end{align}Here, in the last step, we have evaluated \begin{align}&\int
_0^1\frac{\D U}{ \sqrt{U(1-U)}}\int_0^1\frac{\D V}{\sqrt{V(1-V)}}\frac{\log(1-\alpha V)}{1-\alpha U V}\notag\\={}&\pi\int_0^1\frac{\log(1-\alpha V)\D V}{\mathbb Y_\alpha(V)}=\pi \mathbf K(\sqrt{\alpha})\log(1-\alpha)
\end{align}with the aid of \eqref{eq:logdn_int}. Likewise, starting with a variable substitution  $W=[1-(1-\beta)  U]V/[1-(1-\beta )UV]$  such that \begin{align}
\frac{W}{1-W}=\frac{[1-(1-\beta)  U]V}{1-V},
\end{align}  we may compute \begin{align}&
\int_0^1\frac{[1-(1-\beta)U]\D U}{\mathbb Y_{1-\beta}(U)}\int_0^1\frac{\D W}{\sqrt{W(1-W)}}\frac{\log(1-\alpha W-\beta(1-W))}{1-[1-\alpha W-\beta(1-W)]U}\notag\\={}&2\pi\left[ \int_1^\beta\frac{\mathbf K(\sqrt{1-t})}{t-\alpha}\left( 1-\sqrt{\frac{1-\alpha}{1-t}} \right)\D t-\frac{\pi\mathbf K(\sqrt{\alpha})}{2}+\frac{\mathbf K(\sqrt{1-\alpha})}{2}\log(1-\alpha) \right].
\end{align}Thus, the claimed identity is verified.\end{proof}

Exploiting the integral identity in the lemma above, together with some modular transformations of elliptic integrals, we will prove Theorem~\ref{thm:pq_recip}.
\begin{proof}[Proof of Theorem~\ref{thm:pq_recip}]We
recall
that  the Legendre function of the first kind of degree $-1/4$ is defined by \begin{align}&
P_{-1/4}(1-2t):={_2}F_1\left(  \left.\begin{array}{c}
\frac{1}{4},\frac{3}{4} \\
1 \\
\end{array}\right| t\right)\notag\\={}&\frac{1}{\sqrt{2}\pi}\int_0^{1}\left[ \frac{u(1-tu)}{1-u} \right]^{-1/4}\frac{\D u}{1-u},\quad t\in\mathbb C\smallsetminus[1,+\infty).
\end{align}The following relations between $P_{-1/4}$ and the complete elliptic integral $\mathbf K$ are recorded in Ramanujan's notebook \cite[][Chap.~33, Theorems 9.1 and 9.2]{RN5}:\begin{align}
\mathbf K\left( \sqrt{\frac{2q}{1+q}} \right)={}&\frac{\pi}{2}\sqrt{1+\smash[b]{q}}P_{-1/4}(1-2q^2),\label{eq:quarter1}\\\mathbf K\left( \sqrt{\frac{1-q}{1+q}} \right)={}&\frac{\pi}{2}\sqrt{\frac{1+\smash[b]{q}}{2}}P_{-1/4}(2q^2-1),\label{eq:quarter2}
\end{align} which are  provable by standard transformations of the respective hypergeometric functions, provided that  $q\in(0,1)$.

With the information listed in the last paragraph, we see that \begin{align}
A(p,q)={}&\int_0^{2q/(1+q)}\frac{\sqrt{2}P_{-1/4}(2q^{2}-1)\mathbf K(\sqrt{t})}{\sqrt{1-t}\sqrt{1-\smash[b]{p}}+\sqrt{1+\smash[b]{p}}}\frac{\D t}{\sqrt{1-t}}\notag\\{}&+\int_{2q/(1+q)}^1\frac{{2}P_{-1/4}(1-2q^{2})\mathbf K(\sqrt{1-t})}{\sqrt{1-t}\sqrt{1-\smash[b]{p}}+\sqrt{1+\smash[b]{p}}}\frac{\D t}{\sqrt{1-t}}.
\end{align}

On one hand, with $t=4\sqrt{s}/(1+\sqrt{s})^2$ and Landen's transformation \cite[][item~163.02]{ByrdFriedman} \begin{align}
\mathbf K(\sqrt{s})={}&\frac{1}{1+\sqrt{s}}\mathbf K\left( \frac{2\sqrt[4]{s}}{1+\sqrt{s}} \right),\quad 0<s<1,\label{eq:Landen_2}
\end{align}we have\begin{align}&\int_0^{2q/(1+q)}\frac{\mathbf K(\sqrt{t})}{\sqrt{1-t}\sqrt{1-\smash[b]{p}}+\sqrt{1+\smash[b]{p}}}\frac{\D t}{\sqrt{1-t}}\notag\\={}&2\int_{0}^{(1-\sqrt{1-\smash[b]{q}^2})/(1+\sqrt{1-\smash[b]{q}^2})}\frac{\mathbf K(\sqrt{s})}{(1-\sqrt{s})\sqrt{1-\smash[b]{p}}+(1+\sqrt{s})\sqrt{1+\smash[b]{p}}}\frac{\D s}{\sqrt{s}}.
\end{align}On the other hand, it is clear from a substitution $t=1-s$ that \begin{align}&
\int_{2q/(1+q)}^1\frac{\mathbf K(\sqrt{1-t})}{\sqrt{1-t}\sqrt{1-\smash[b]{p}}+\sqrt{1+\smash[b]{p}}}\frac{\D t}{\sqrt{1-t}}\notag\\={}&\int_{0}^{(1-q)/(1+q)}\frac{\mathbf K(\sqrt{s})}{\sqrt{s}\sqrt{1-\smash[b]{p}}+\sqrt{1+\smash[b]{p}}}\frac{\D s}{\sqrt{s}}\notag\\={}&\int_0^{(1-q)/(1+q)}\frac{\sqrt{2}\mathbf K(\sqrt{s})}{(1-\sqrt{s})\sqrt{1-\sqrt{1-\smash[b]{p}^2}}+(1+\sqrt{s})\sqrt{1+\smash[b]{\sqrt{1-\smash[b]{p}^2}}}}\frac{\D s}{\sqrt{s}}.
\end{align}Here, the last equality results from a  pair of  elementary identities for $p\in(0,1)$: \begin{align}
\sqrt{\frac{1+\sqrt{1-\smash[b]{p}^{2}}}{2}}\pm
\sqrt{\frac{1-\sqrt{1-\smash[b]{p}^{2}}}{2}}=\sqrt{1\pm \smash[b]{p}},
\end{align}which are readily verified by squaring both sides.

Therefore, with $p'=\sqrt{1-p^2},q'=\sqrt{1-q^2} $, we have \begin{align}
A(p,q)={}&\int_{0}^{(1-q')/(1+q')}\frac{2\sqrt{2}P_{-1/4}(1-2q'^2)\mathbf K(\sqrt{s})}{(1-\sqrt{s})\sqrt{1-\smash[b]{p}}+(1+\sqrt{s})\sqrt{1+\smash[b]{p}}}\frac{\D s}{\sqrt{s}}\notag\\{}&+\int_{0}^{(1-q)/(1+q)}\frac{2\sqrt{2}P_{-1/4}(1-2q^2)\mathbf K(\sqrt{s})}{(1-\sqrt{s})\sqrt{1-\smash[b]{p'}}+(1+\sqrt{s})\sqrt{1+\smash[b]{p'}}}\frac{\D s}{\sqrt{s}},\label{eq:Apq_Ap'q'}
\end{align}which is evidently equal to $A(p',q')$.
\end{proof}
%-Bibliography
\noindent
{\bf Acknowledgement}

M.L.G. thanks Udo Ausserlechner (Infinion Technologies) and Michael Milgram (Geometrics Unlimited) for insightful correspondence. Financial support of MINECO (Project MTM2014-57129-C2-1-P) and Junta de Castilla y  Leon (UIC 0 11) is acknowledged.

\end{document}